\title{Efficient attack sequences in m-eternal domination}
\author{V{\'a}clav Bla{\v z}ej}{Faculty of Information Technology, Czech Technical University in Prague, Prague, Czech Republic}{}{https://orcid.org/0000-0001-9165-6280}{}
\author{Jan Maty{\'a}{\v s} K{\v r}i{\v s}{\v t}an}{Faculty of Information Technology, Czech Technical University in Prague, Prague, Czech Republic}{}{https://orcid.org/0000-0001-6657-0020}{}
\author{Tom{\'a}{\v s} Valla}{Faculty of Information Technology, Czech Technical University in Prague, Prague, Czech Republic}{}{https://orcid.org/0000-0003-1228-7160}{}
\authorrunning{V. Bla{\v z}ej, J. M. K{\v r}i{\v s}{\v t}an, and T. Valla}
\keywords{eternal domination, combinatorial games, neo-colonization}
\newtheorem{reduction}{Reduction}{}
\crefname{reduction}{reduction}{reductions}
\newcommand\putabove[2]{\mathrel{\overset{\makebox[0pt]{\mbox{\normalfont\tiny\sffamily #2}}}{#1}}}
\newcommand{\customlabel}[2]{%
\protected@write \@auxout {}{\string \newlabel {#1}{{#2}{\thepage}{#2}{#1}{}} }%
\hypertarget{#1}{#2}
}
\newcommand{\EDN}{\gamma^{\infty}_\mathrm{m}} 
\newcommand{\MINAT}{\beta} 
\newcommand\DF{\delta}
\newcommand\DFT{\delta_T}
\newcommand\G{g}
\newcommand\GT{g_T}
\newcommand\CHLD{\text{chld}}
\newcommand\CN{\kappa}
\newcommand\PAR{p}
\newcommand\NEOCOL{\mathcal{V}}
\begin{document}

\maketitle

\begin{abstract}
  We study the m-eternal domination problem from the perspective of the attacker.
  For many graph classes, the minimum required number of guards to defend eternally is known.
  By definition, if the defender has less than the required number of guards, then there exists a sequence of attacks that ensures the attacker's victory.
  Little is known about such sequences of attacks, in particular, no bound on its length is known.

  We show that if the game is played on a tree $T$ on $n$ vertices and the defender has less than the necessary number of guards, then the attacker can win in at most $n$ turns.
Furthermore, we present an efficient procedure that produces such an attacking strategy.
\end{abstract}


\section{Introduction}

Consider the following game, played by an attacker and a defender on graph $G$.
The defender controls a set of guards, which he initially places on the vertices of $G$.
Each vertex can be occupied by at most one guard.

In each turn, the attacker first chooses one vertex, which he \emph{attacks}.
The defender then must \emph{defend} against the attack by moving some or all of his guards along their adjacent edges, so that one of the guards moves to the attacked vertex.

If the attacked vertex is not occupied by a guard after the attack, the attacker wins.
The defender wins if he can defend indefinitely.
An \emph{m-eternal dominating set} is a set of vertices, which, when used as the starting configuration of guards, is winning for the defender.

Much previous research has focused on determining the minimum size of such a set, denoted by $\EDN(G)$.
It has been explored in several recent papers, see for instance \cite{henning2017bounds, eternal-dom-sets, eternal-security-in-graphs, proper-interval-graphs, blavzej2019m, Messinger2017}.
Some recent works have focused on trees in particular, such as \cite{klostermeyer2021eternal, henning2016trees}.

It is unknown if it is possible to decide in PSPACE whether a set of guards induces an m-eternal dominating set.
Klostermeyer et al.~\cite{survey-article} pose the following open problem: is there a function $\MINAT(n)$, where $n$ is the number of vertices of the graph $G$, such that if one can defend against $\MINAT(n)$ attacks with a given configuration of guards, it is possible to defend against any sequence of attacks?
No such $\MINAT(n)$ is currently known in the general case.
A polynomial upper bound on $\MINAT(n)$ would imply that the problem of deciding whether a set of vertices induces an m-eternal dominating set is in PSPACE, as it would suffice to try every possible response to every possible sequence of $\MINAT(n)$ attacks.

In this paper, we provide a linear bound on $\MINAT(n)$ when the input graph is a tree.  
The result follows from the following theorem.

\begin{restatable*}{theorem}{attackstrategy}
\label{thm:v_trees}
  Let $C$ be a configuration of guards on tree $T$ of at most $\EDN(T) - 1$ guards.
  Then the attacker can win against $C$ in at most ${\rm diam}(T)$ steps, where ${\rm diam}(T)$ is the diameter of $T$.
\end{restatable*}

All graphs considered are undirected and simple.
Let $T$ be a tree rooted in $r \in V(T)$.
Let $v \in V(T)$, then by $\CHLD(v)$ we denote the children of $v$ and by $\PAR(v)$ the parent of $v$.
By $T(v)$ we denote the subtree of $T$ rooted in $v$.
We say that a tree with no vertices of degree $2$ is a \emph{shrubbery}.
Let $G = (V, E)$ be a graph and $S \subseteq V$, then by $G[S]$ we denote the subgraph of $G$ induced by $S$.
By $K_n$ we denote the complete graph on $n$ vertices.
We say that a \emph{leaf} of a rooted tree is any vertex of degree 1, including the root.
We say that $v \in V(T)$ is an \emph{inner} vertex if v has degree at least 2.

\section{Efficient attack sequences on trees}

In this section, we present an explicit defending strategy on trees, whose definition will help us identify parts of tree $T$ which are vulnerable to attacks.
This is followed by a description of the attacking process, which guarantees the attacker a win as long as the number of guards is less than $\EDN(T)$.


\subsection{Defending strategy based on neo-colonization}

Let $T$ be the tree on which the game is played.
To describe the attacking strategy, we make use of the neo-colonization of $T$.
Let $\gamma_c(G)$ be the size of the minimum connected dominating set of a graph $G$.
\begin{definition}[Goddard et al.~\cite{eternal-security-in-graphs}]
  A \emph{neo-colonization} of a graph $G$ is a partition $\NEOCOL = \{V_1, \dots, V_t\}$ of the vertex set of $G$ such that each $G[V_i]$ is connected.
  Each part $V_i$ is assigned a weight $\omega(V_i)$ as follows.
  \[
    \omega(V_i) = \begin{cases}
      1 & \text{ if  $G[V_i]$ is a clique }\\
      \gamma_c(G[V_i]) + 1 & \text{ otherwise}
    \end{cases}
  \]
\end{definition}
By $\omega(\NEOCOL)$ we denote the total weight of the neo-colonization.
By $\theta_c(G)$ we denote the minimum total weight of any neo-colonization of $G$, and it is called the \emph{clique-connected cover number} of $G$.
Goddard et al.~\cite{eternal-security-in-graphs} proved that $\EDN(G) \leq \theta_c(G)$.

We define a special case of neo-colonization which will be useful in describing defending strategies on trees.

\begin{definition}\label{def:nice-neocol}
We say that a neo-colonization $\NEOCOL = \{V_1, \dots, V_t\}$ of a tree $T$ rooted at a leaf $r$ is \emph{nice} if for every $V_i \in \NEOCOL$
\begin{itemize}
  \item $T[V_i]$ is a shrubbery,
  \item the vertex with the minimum distance to $r$ in $V_i$ is a leaf in $T[V_i]$.
\end{itemize}
\end{definition}
By $\NEOCOL(v)$ we denote the part $V_i$ such that $v \in V_i$.

Klostermeyer et al.~\cite{eternal-dom-sets} proved that $\theta_c(T) = \EDN(T)$ on trees.
We will show that this equality holds even when we consider the minimum total weight among only nice neo-colonizations.
To that end, we first recall the linear algorithm that computes $\EDN$ in trees by Klostermeyer et al.~\cite{eternal-dom-sets}.
The algorithm is based on two reductions.

\begin{reduction}\label{rdc:leaves-orig}
Let $x$ be a vertex of $T$ incident to $\ell \geq 2$ leaves and to exactly one
vertex of degree at least two.
Delete all leaves incident to $x$.
\end{reduction}

\begin{reduction}\label{rdc:leaf}
Let $x$ be a vertex of degree two in $T$ such that $x$ is adjacent to exactly one leaf, say $y$.
Delete both $x$ and $y$.
\end{reduction}

\begin{lemma}[Klostermeyer et al.~{\cite[Lemma 20 and Lemma 21]{eternal-dom-sets}}]\label{lem:rdc-correct}
If $T'$ is the result of applying \Cref{rdc:leaves-orig} or \Cref{rdc:leaf} to the tree $T$, then $T'$ is a tree and $\EDN(T) = 1 + \EDN(T')$.
\end{lemma}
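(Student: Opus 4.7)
The plan is to establish $\EDN(T) = 1 + \EDN(T')$ for each reduction by proving the two inequalities $\EDN(T) \leq 1 + \EDN(T')$ and $\EDN(T) \geq 1 + \EDN(T')$ separately. The overarching intuition is that the removed portion---the star of leaves around $x$ in \Cref{rdc:leaves-orig}, or the pendant pair $\{x,y\}$ in \Cref{rdc:leaf}---is handled by exactly one dedicated guard, so defending $T$ should cost precisely one more guard than defending $T'$. Note that in both reductions $x$ becomes a leaf of $T'$, which will be crucial for transferring strategies back and forth.

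For the upper bound direction, I would fix a winning defense on $T'$ using $\EDN(T')$ guards and extend it by placing one extra guard at $x$. Attacks on vertices outside the removed piece are answered using the $T'$-strategy; the extra guard adjusts to stay near $x$. Attacks inside the removed piece---a leaf $y_i$ of $x$ in \Cref{rdc:leaves-orig}, or $x$ or $y$ in \Cref{rdc:leaf}---are answered by the extra guard, while the $T'$-strategy responds to a ``virtual'' attack at $x$ (a legal leaf attack in $T'$) in order to restore its invariants. Because the defender may move any subset of guards along edges in one turn, the extra guard can be shuttled back into position on subsequent turns by piggy-backing on whatever the next genuine attack triggers.

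For the lower bound, I would establish a structural invariant: in any winning defense of $T$, at least one guard always occupies the removed subset $S$ (namely $\{x,y_1,\dots,y_\ell\}$ in \Cref{rdc:leaves-orig}, or $\{x,y\}$ in \Cref{rdc:leaf}). Otherwise the attacker wins quickly: when $\ell \geq 2$ leaves are undefended, a two-attack combination at distinct leaves cannot be answered by any single guard arriving from outside $S$, and in \Cref{rdc:leaf} attacking $y$ with $\{x,y\}$ empty is immediately fatal since $y$'s only neighbor is $x$. Using this invariant, any winning defense of $T$ can be projected to a winning defense of $T'$ by collapsing the distinguished $S$-guard to a (then discarded) guard placed at $x$ in $T'$, yielding $\EDN(T') \leq \EDN(T) - 1$.

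The main obstacle will be making the projection rigorous. Within a single turn, several guards may cross into or out of $S$, so one has to designate a canonical ``representative'' $S$-guard at every moment and verify that collapsing internal $S$-movements to ``stay at $x$'' produces a legitimate sequence of $T'$-configurations connected by legal $T'$-moves. A natural choice is to take the $S$-guard closest to the attachment point at the start of each turn as the representative, and then argue inductively that the collapsed defense respects $T'$-adjacency and defends against every $T'$-attack sequence obtained from a $T$-attack sequence by suppressing the attacks that land inside $S$.
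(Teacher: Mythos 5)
This lemma is not proved in the paper at all: it is imported verbatim from Klostermeyer et al.\ (Lemmas 20 and 21 of the cited work), so there is no in-paper argument to compare against and you are doing extra work by proving it from scratch. Evaluated on its own terms, your upper-bound direction is sound in spirit, but note one factual slip: it is \emph{not} true that $x$ becomes a leaf of $T'$ in both reductions. In \Cref{rdc:leaf} the vertex $x$ is deleted together with $y$, so there is no ``virtual attack at $x$'' to feed to the $T'$-strategy there; fortunately that case is easier, since the extra guard confined to $\{x,y\}$ handles all attacks on $\{x,y\}$ by itself and never interacts with the $T'$-guards. The virtual-attack mechanism is genuinely needed only for \Cref{rdc:leaves-orig}, where $x$ does become a leaf of $T'$, and there it works (with the usual relabelling of which physical guard is ``the extra one'' when two leaves of $x$ are attacked in succession).

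The genuine gap is in the lower bound, exactly at the point you flag but do not resolve. Your invariant is correct and in fact follows more directly than you argue: every configuration of a winning strategy must be a dominating set, and $N[y_i]\subseteq S$, so $S$ always contains a guard. The problem is that $|C\cap S|$ is only bounded \emph{below} by $1$; nothing prevents a winning strategy from having $2$, $3$, or more guards inside $S$ at various times. Consequently the collapsed configuration $C'$ has $k-|C\cap S|$ guards outside $S$ plus however many representatives you retain, and this total \emph{fluctuates} from turn to turn, whereas a strategy on $T'$ must use a fixed number of guards. You cannot absorb the surplus by mapping all non-representative $S$-guards to the single attachment vertex, because each vertex holds at most one guard; and when the surplus later drains out of $S$ (a guard steps from $x$ to its non-leaf neighbour $w$ while another $S$-guard replaces it at $x$), the projection makes a guard materialize at $w$ without any adjacent guard having vacated its position --- an illegal $T'$-move. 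Choosing the representative as ``the $S$-guard closest to the attachment point'' decides \emph{which} guard to identify, not \emph{how many} to discard, so it does not close this hole. A complete proof needs either a normalization step showing that an optimal defence may be assumed to keep exactly one guard in $S$ at all times (itself a nontrivial exchange argument), or a different lower-bound technique altogether; as written, the $\geq$ direction is not established.
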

We derive the following reduction from \Cref{rdc:leaves-orig} so the subsequent analysis is made simpler.

\begin{reduction}\label{rdc:leaves}
Let $x$ be a vertex of $T$ such that all of its $\ell \geq 2$ children are leaves and it has a parent.
Delete all children of $x$.
\end{reduction}
Note that the proof of Lemma 20~\cite{eternal-dom-sets} is applicable to \Cref{rdc:leaves} as well, if we consider $w$, as denoted in the proof of Lemma 20 in~\cite{eternal-dom-sets}, to be the parent of $x$.
The argument only uses the fact that a vertex neighboring $x$ remains after the reduction, with no argument made based on its degree.
This implies the following lemma.
\begin{lemma}\label{lem:leaves-correct}
If $T'$ is the result of applying \Cref{rdc:leaves} to the tree $T$, then $T'$ is a tree and $\EDN(T) = 1 + \EDN(T')$.
\end{lemma}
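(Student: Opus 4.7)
The plan is to follow the proof of Lemma 20 in Klostermeyer et al.~\cite{eternal-dom-sets} almost verbatim, with the parent $w := p(x)$ of $x$ taking over the role played in the original proof by the (unique) non-leaf neighbor of $x$. First I would note that $T'$ is a tree: \Cref{rdc:leaves} only removes leaves, which never disconnects a tree and never creates a cycle. Thus the structural claim is immediate, and the content of the lemma is the equality $\EDN(T) = 1 + \EDN(T')$.

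For the upper bound $\EDN(T) \leq 1 + \EDN(T')$, I would take an optimal defender strategy on $T'$ that uses $\EDN(T')$ guards and lift it to a strategy on $T$ by adding one extra guard that sits on $x$ (or, after a leaf attack, on one of the children of $x$). Whenever the attacker attacks any vertex in $T'$, the defender simulates the $T'$-strategy and, if needed, moves the extra guard back to $x$ along the edge $xw$. Whenever the attacker attacks a leaf child of $x$, the extra guard steps onto that leaf, and on the next $T'$-move the guard on $w$ (or a neighbor) slides to $x$ to reset the invariant. For the lower bound $\EDN(T) \geq 1 + \EDN(T')$, I would start from any defender strategy on $T$ with $k$ guards and produce a strategy on $T'$ with $k-1$ guards: the guard currently occupying a leaf child of $x$ is ``absorbed'' into $x$, any move of a guard from $x$ to a leaf child of $x$ (or back) is interpreted as staying put in $T'$, and all other moves are copied. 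Since the deleted leaves are dominated only through $x$, at least one guard must be effectively dedicated to them, and that guard is the one we drop.

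The only place where the reduction could differ from the original is in how $w$ is chosen, and the key observation already flagged in the excerpt is that the proof of Lemma 20 in~\cite{eternal-dom-sets} never uses the assumption $\deg(w) \geq 2$; it only uses that $w$ is adjacent to $x$ and remains after the reduction. In \Cref{rdc:leaves} the parent $p(x)$ satisfies both properties (it exists by assumption and is not among the deleted leaves). Therefore the two simulations above go through without change.

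The only step I expect to require care is the lower bound. I would have to verify that when $p(x)$ has degree $1$ in $T$ (possible under \Cref{rdc:leaves} but disallowed in \Cref{rdc:leaves-orig}), the absorbed guard really can be removed without breaking the strategy. This is handled by the same bookkeeping as in the original proof: the vertex $x$ itself is always covered by the absorbed guard in $T$, so in $T'$ the vertex $x$ becomes a leaf that is covered by the corresponding guard placement, and no attack on $p(x)$ in $T'$ can demand more defenders than were available in $T$. With that verification in place, both inequalities hold and $\EDN(T) = 1 + \EDN(T')$ follows.
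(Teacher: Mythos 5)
Your proposal matches the paper's argument: the paper likewise proves this lemma by observing that the proof of Lemma~20 of Klostermeyer et al.\ carries over verbatim once $w$ is taken to be the parent of $x$, since that proof only uses that a neighbor of $x$ survives the reduction and never its degree. Your extra care about the case $\deg(p(x))=1$ is exactly the point the paper flags, so the proposal is correct and essentially identical in approach.
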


\begin{lemma}\label{lem:reductions-result}
  Let $T$ be a tree rooted in a leaf $r$.
  After an exhaustive application of \Cref{rdc:leaf} or \Cref{rdc:leaves}, we are left with a $K_1$ or $K_2$.
\end{lemma}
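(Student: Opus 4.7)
The plan is an induction on $|V(T)|$. The base cases $|V(T)| \in \{1,2\}$ give $K_1$ or $K_2$ directly, so the real work lies in the inductive step: for $|V(T)| \geq 3$, I must exhibit an applicable reduction. \Cref{lem:rdc-correct} and \Cref{lem:leaves-correct} then guarantee that the result is a strictly smaller tree, and one checks that neither reduction ever removes the root $r$, since $r$ is nobody's child and hence its unique incident edge is never among those deleted; so the reduced tree is again rooted at the same leaf (or has collapsed to $K_1$), and the induction hypothesis applies.

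To locate an applicable reduction, I would pick a vertex $x$ of $T$ at maximum depth among those having at least one child. Such an $x$ exists because $r$ itself has a child, and by maximality of the depth, every child of $x$ must itself be childless, i.e.\ a leaf. If $x$ has at least two leaf children, then $x \neq r$ because the leaf root $r$ has only one child; so $x$ has a parent and \Cref{rdc:leaves} applies directly. If instead $x$ has exactly one leaf child $y$, the assumption $|V(T)| \geq 3$ rules out $x = r$, so $x$ still has a parent, giving $\deg_T(x) = 2$, and \Cref{rdc:leaf} applies with $y$ as the deleted leaf.

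The one place I expect the argument to need a little care is the three-vertex path $T = r{-}x{-}y$, where the middle vertex $x$ has two degree-one neighbors $r$ and $y$, at first sight threatening the ``exactly one leaf'' clause of \Cref{rdc:leaf}. To reconcile this, I would appeal to the rooted viewpoint used throughout this section: the root $r$ acts as an anchor rather than as an ordinary leaf neighbor, so ``leaf'' in the reductions is naturally read as ``non-root vertex of degree one''. Under this convention \Cref{rdc:leaf} cleanly deletes $x$ and $y$, leaves $K_1$, and closes the induction. Beyond this single sanity check, the argument should be routine.
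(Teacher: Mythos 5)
Your proof is correct and follows essentially the same route as the paper's: for $n \geq 3$ one exhibits an applicable reduction by inspecting a deepest leaf, and exhaustive application does the rest. Two of the precautions you take are in fact improvements on the paper's own write-up. First, the paper picks an \emph{arbitrary} leaf $c \neq r$ and asserts that \Cref{rdc:leaves} applies to $\PAR(c)$ whenever $c$ has siblings; but \Cref{rdc:leaves} requires \emph{all} children of $\PAR(c)$ to be leaves, which an arbitrary choice does not guarantee. Your choice of a deepest vertex having a child is exactly what makes that hypothesis hold, so keep it. Second, the three-vertex path $r$--$x$--$y$ is a genuine corner case, not just a ``sanity check'': the paper explicitly declares the root to be a leaf, so $x$ is adjacent to two leaves and \Cref{rdc:leaf}, read literally, does not apply; the paper's proof silently asserts that it does. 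Your repair (reading ``leaf'' in \Cref{rdc:leaf} as a non-root vertex of degree one) is the natural one and matches how the reduction is used in \Cref{lem:neocol-construction}, but since it contradicts the paper's stated convention you should state it explicitly rather than leave it as an implicit reading. The remaining bookkeeping in your argument --- that neither reduction ever deletes $r$, so the reduced tree is again rooted at the leaf $r$ or has collapsed to $K_1$ --- is correct and is the right way to close the induction.
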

\begin{proof}
  Suppose that $n \geq 3$ and $c \neq r$ is a leaf of $T$.
  If $\PAR(c)$ has no other children than $c$, then \Cref{rdc:leaf} is applicable.
  Otherwise, $\PAR(c)$ has more than $2$ children.
  Note that $\PAR(c) \neq r$ as $r$ was chosen to be a leaf of $T$, therefore $p$ has a parent as well and \Cref{rdc:leaves} is applicable.
  Thus, if neither reduction is applicable, then $n \leq 2$.
\end{proof}
The following lemma shows how those reductions can be used to construct a nice neo-colonization of any tree.

\begin{lemma}\label{lem:neocol-construction}
  For every tree $T$ there exists a nice neo-colonization $\NEOCOL$ with $\omega(\NEOCOL) = \EDN(T)$.
\end{lemma}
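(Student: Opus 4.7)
My plan is to prove the lemma by strong induction on $n = \lvert V(T) \rvert$, strengthening the statement to additionally guarantee that the only singleton part in $\NEOCOL$, if any, is $\{r\}$ itself. This extra structural invariant is exactly what will make the inductive step go through in the \Cref{rdc:leaves} case.

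The base cases $n \in \{1, 2\}$ I handle by taking $\NEOCOL = \{V(T)\}$, a clique of weight $1 = \EDN(T)$. For the inductive step with $n \geq 3$, I would apply \Cref{rdc:leaf} or \Cref{rdc:leaves} to $T$ via \Cref{lem:reductions-result}, choosing the triggering leaf $c \neq r$ so that $r$ is preserved. Let $T'$ denote the result: by \Cref{lem:rdc-correct} or \Cref{lem:leaves-correct} we have $\EDN(T) = \EDN(T') + 1$, and by induction $T'$ admits a nice neo-colonization $\NEOCOL'$ of weight $\EDN(T')$ whose only possible singleton is $\{r\}$.

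If \Cref{rdc:leaf} deleted a degree-$2$ vertex $x$ together with its non-root leaf neighbor $y$, I would take $\NEOCOL = \NEOCOL' \cup \{\{x,y\}\}$: the new $K_2$ has weight $1$, its $r$-closest vertex $x$ is a leaf of $K_2$, and the remaining parts stay nice since $T[V_i]$ and distances from $r$ are unchanged for $V_i \subseteq V(T')$. If \Cref{rdc:leaves} deleted the leaf children $y_1,\dots,y_\ell$ of a non-root vertex $x$, I would let $V_j' \ni x$ in $\NEOCOL'$ and set $V_j = V_j' \cup \{y_1,\dots,y_\ell\}$. The strengthened invariant gives $\lvert V_j' \rvert \geq 2$, and since $x$'s only $T'$-neighbor is $\PAR(x)$, connectivity forces $\PAR(x) \in V_j'$. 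Thus the $r$-closest vertex of $V_j$ is that of $V_j'$, which differs from $x$ and keeps its old degree, so it remains a leaf of $T[V_j]$; and since $x$'s degree jumps from $1$ to $1+\ell \geq 3$, $T[V_j]$ is still a shrubbery with exactly one more internal vertex, giving $\omega(V_j) = \omega(V_j') + 1$.

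In both cases the total weight becomes $\EDN(T') + 1 = \EDN(T)$, and no non-root singleton is created, so the strengthened invariant propagates. The main obstacle I anticipate is precisely this invariant: without it, the \Cref{rdc:leaves} case admits a degenerate configuration $V_j' = \{x\}$ in which $T[V_j] = K_{1,\ell}$ either fails to be a shrubbery (when $\ell = 2$) or has its internal center $x$ as the $r$-closest vertex (when $\ell \geq 3$), both violating niceness. The invariant costs nothing to maintain since neither construction step ever creates a non-root singleton.
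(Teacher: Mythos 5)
Your proof is correct, and it takes a genuinely different route from the paper's. The paper builds the neo-colonization \emph{globally}: it runs the full reduction sequence, records each application of \Cref{rdc:leaf} or \Cref{rdc:leaves} as edges of an auxiliary graph $H$ on $V(T)$, takes the connected components of $H$ as the parts, and then verifies niceness and the weight identity by a structural analysis of each component (each non-root component contains exactly one edge coming from \Cref{rdc:leaf}, added last; the root's part and the terminal $K_1$/$K_2$ need separate treatment). You instead argue by induction on $n$, extending a nice neo-colonization of the reduced tree $T'$ by a purely local modification: a fresh $K_2$ part $\{x,y\}$ in the \Cref{rdc:leaf} case, or absorbing the deleted leaves into the part of $x$ in the \Cref{rdc:leaves} case. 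The one genuinely new ingredient is your strengthened invariant that no non-root part is a singleton; this is exactly what forces $\PAR(x)$ to already lie in $x$'s part, so that after the merge the part remains a shrubbery and its $r$-closest vertex remains a leaf --- the degenerate $K_{1,\ell}$ configuration you identify is real, and the paper's global argument avoids it only implicitly (its components are built bottom-up and always terminated by a \Cref{rdc:leaf} edge or by the root). Both arguments rest on the same reductions and on \Cref{lem:rdc-correct}, \Cref{lem:leaves-correct}, and \Cref{lem:reductions-result}; yours trades the paper's one-shot construction plus case analysis for a shorter, more local verification, at the cost of carrying the extra inductive invariant. The only cosmetic gap is the weight computation for the merged part, where you should note that the identity $\omega(V_i)=(\text{number of internal vertices of }T[V_i])+1$ holds uniformly for shrubberies, including the clique cases $K_1$ and $K_2$, so that ``one more internal vertex'' indeed yields ``weight plus one'' even when $V_j'$ was a $K_2$.
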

\begin{proof}
First, we root the tree $T$ in an arbitrary leaf $r$.
  In each step, we consider a leaf of maximum depth and apply either \Cref{rdc:leaf} or \Cref{rdc:leaves}, depending on which one is applicable.

  We construct an auxiliary graph $H$ on the vertices of $T$.
  Its connected components will induce the parts of the resulting nice neo-colonization $\NEOCOL$ of $T$.
  If we apply \Cref{rdc:leaves} with leaves $c_1, \dots, c_k$ and their parent $x$, then for each $i \in \{1, \dots, k\}$ we add $\{c_i, x\}$ to $E(H)$.
  If we apply \Cref{rdc:leaf} with leaf $y$ and its parent $x$, then we add $\{x, y\}$ to $E(H)$.

  By \Cref{lem:reductions-result} we eventually reduce $T$ to $K_1$ or $K_2$.
  If $T$ is reduced to $K_2$, we connect the two remaining vertices by an edge in $H$.
  \begin{claim}
    Let $V_1, \dots, V_t$ be the connected components of $H$.
    Then $\NEOCOL = \{V_1, \dots, V_t\}$ is a nice neo-colonization of $T$.
  \end{claim}
  \begin{proof}
  We show that $\NEOCOL$ satisfies the conditions of \Cref{def:nice-neocol}.
  Note that all $V_i, V_j \in \NEOCOL$ such that $V_i \neq V_j$ are vertex disjoint.
  Also note that in each $H[V_i]$ except $H[\NEOCOL(r)]$, there is exactly one edge, say $e = \{x, y\}$, constructed by the application of \Cref{rdc:leaf} and it is the last edge added to $H[V_i]$.
  Suppose that when applying \Cref{rdc:leaf}, $x$ was the vertex of degree two and $y$ was the leaf.
  While applying the reduction, $x$ is deleted, and therefore no other vertex than $y$ will be adjacent to $x$ in $H$.
  Thus $x$ is the vertex closest to $r$ in $V_i$ and its only neighbor in $H$ is $y$.

  We also show that $T[V_i]$ contains no vertex of degree $2$ for every $V_i \in \NEOCOL$.
  Let $h$ be the vertex of minimum depth in $V_i$.
  Either $V_i$ consists of $2$ vertices and both have degree $1$ in $T[V_i]$, or every vertex except $h$ was part of some application of \Cref{rdc:leaves}.
  This follows from the fact that for every $T[V_i]$ at most one edge was constructed after an application of \Cref{rdc:leaf}.
  Each $u \in V_i \setminus \{h\}$ thus appeared only as one of the leaves when applying \Cref{rdc:leaves} and has degree $1$ in $T[V_i]$, or at some point appeared as $x$ when applying \Cref{rdc:leaves}.
  In that case, $T[V_i]$ also contains at least two children of $u$.
  Furthermore, $x$ also appeared as a leaf in a subsequent \Cref{rdc:leaves} or as $y$ in a subsequent \Cref{rdc:leaf}.
  In both cases $T[V_i]$ also contains the parent of $u$, thus $u$ has degree at least $3$ in $T[V_i]$.
  \end{proof}

  We show that $\omega(\NEOCOL) = \EDN(T)$.
  Let $\rho$ be the total number of applications of Reductions~\ref{rdc:leaf} and \ref{rdc:leaves}.
  Let $T'$ be the resulting tree after exhaustively applying Reductions~\ref{rdc:leaf} and \ref{rdc:leaves}.
  From \Cref{lem:rdc-correct}, \Cref{lem:leaves-correct}, and \Cref{lem:reductions-result}, it follows that
  \begin{equation}\label{eq:algo-works}
    \EDN(T) = \rho + \EDN(T') = \rho + 1.
  \end{equation}
  Now, let us consider the weight of the individual colonies.
  If $H[V_i]$ consists of a single edge, then $\omega(V_i) = 1$.
  We show that otherwise $\omega(V_i) = \gamma_c(T[V_i]) + 1$.
  
  For any tree $T$ on $n$ vertices, the minimum connected dominating set of $T$ consists of the complement of the set of leaves~\cite{connected-domination-trees}.
  Suppose that $T[V_i]$ is not a clique, then $\omega(V_i) = d + 1$ where $d$ is the number of vertices of degree at least $3$ in $T[V_i]$.
  Note that every inner vertex of $T[V_i]$ was part of exactly one \Cref{rdc:leaves} as the parent vertex and also exactly one edge of $T[V_i]$ was added by an application of \Cref{rdc:leaf}.
  Let $\rho_i$ be the number of reductions which added an edge to $H[V_i]$.
  Let $d_i$ be the number of inner vertices of $T[V_i]$.
  It follows that 
  \begin{equation}\label{eq:part-weight}
    \omega(V_i) = d_i + 1 = \rho_i
  \end{equation}
  for all $V_i$ which had all of its edges in $H$ added by the reductions.

  It remains to check $\NEOCOL(r)$, which was created or modified when processing $T'$ at the end of the process.
  Without loss of generality, let $V_1 = \NEOCOL(r)$.
  If $T'$ is isomorphic to $K_1$, then $r$ is not adjacent to any other vertex in $H$.
  In that case $\rho_1 = 0$ and $\omega(\NEOCOL(r)) = \EDN(T') = 1$.

  Otherwise, $T'$ is isomorphic to $K_2$.
  Note that every inner vertex of $T[\NEOCOL(r)]$ is still a part of exactly one \Cref{rdc:leaves}.
  Let $u$ be the child of $r$ and let $d_r$ be the number of inner vertices in $T[\NEOCOL(r)]$.
  If $u$ was a part of some \Cref{rdc:leaves}, then it is an inner vertex in $T[\NEOCOL(r)]$ and we have $d_r + 1 = \rho_1 + \EDN(T') = \gamma_c(\NEOCOL(r)) + 1 = \omega(\NEOCOL(r))$.
  Otherwise, $u$ is adjacent only to $r$ in $H$ and we have $\rho_1 = 0$ and $\omega(\NEOCOL(r)) = \EDN(T') = 1$.
  Thus, in any case, we have
  \begin{equation}\label{eq:first-part-weight}
    \omega(V_1) = \rho_1 + \EDN(T') = \rho_1 + 1
  \end{equation}
  Therefore, the total weight of the neo-colonization $\NEOCOL$ is
  \[
    \omega(\NEOCOL) =
    \sum_{i=1}^t \omega(V_i) \putabove{=}{(\ref{eq:first-part-weight})}
    \rho_1 + 1 + \sum_{i=2}^t \omega(V_i) \putabove{=}{(\ref{eq:part-weight})}
    1 + \sum_{i=1}^t \rho_i \putabove{=}{(\ref{eq:algo-works})}
    \EDN(T)
  \]
  Thus the total weight is equal to the minimum required number of guards to defend indefinitely.
\end{proof}
A neo-colonization implies the following defending strategy described by Goddart et al.~\cite{eternal-security-in-graphs}.
We will only move guards in the part that was last attacked.
If the attacked $V_i$ induces a clique, move the only guard assigned to it.
Otherwise, there are $\gamma_c(T[V_i]) + 1$ guards on $V_i$.
Keep the vertices of the minimum connected dominating set of $T[V_i]$ always occupied.
We say that the guard which is not placed on the minimum connected dominating set of $T[V_i]$ is the \emph{extra} guard of $V_i$.

In case of an attack, find a path from the extra guard to the attacked vertex and move the guards along this path.
Note that all vertices of the path except the attacked one must be occupied.

We will define the \emph{canonical strategy}, which is a slight modification of the strategy that follows from the nice neo-colonization.
By $h(V_i)$ we denote the vertex closest to $r$ in $V_i$.

\begin{definition}
  Let the \emph{canonical strategy} for tree $T$ rooted in a leaf $r$ be the strategy that follows from the nice neo-colonization $\NEOCOL$ of $T$ with the following modifications.
  \begin{itemize}
    \item When initially placing the guards on $T$, for every $V_i \in \NEOCOL$ which is not a clique, place the extra guard of $V_i$ on $h(V_i)$.
    \item In every part $V_i \in \NEOCOL$ which was not attacked this turn, we move the extra guard to $h(V_i)$.
  \end{itemize}
\end{definition}

Next, we partition the vertices of $T$ into three sets based on their neighborhood in their respective part: $L$ consists of \emph{leaf} vertices, $J$ consists of \emph{joining} vertices, and $I$ consists of \emph{inner vertices}.
\[
\begin{aligned}
  L &= \{ v \mid |\CHLD(v)| = 0 \} \\
  J &= \{ v \mid |\CHLD(v)| = 1 \}  \\
  I &= \{ v \mid |\CHLD(v)| \geq 2 \}
\end{aligned}
\]

For $V_i \in \NEOCOL$, we use $L(V_i)$, $J(V_i)$, and $I(V_i)$ to denote $L \cap V_i$, $J \cap V_i$, and $I \cap V_i$ respectively.
Note that $J$ contains exactly one vertex from each $T[V_i]$ which has the minimum depth.
The following observation notes when the vertices of $L$, $C$, and $I$ are occupied in the canonical strategy.
\begin{observation}
\begin{itemize}
  \item A leaf vertex is occupied if and only if it was attacked,
  \item a joining vertex is occupied if no leaves in its part were attacked,
  \item an inner vertex is always occupied.
\end{itemize}
\end{observation}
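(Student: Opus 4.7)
The plan is to verify each bullet by combining two ingredients: the explicit guard placement maintained by the canonical strategy within each part $V_i$, and a matching of the global partition $\{L, J, I\}$ with the local roles of vertices inside $T[V_i]$.

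First I would recall how guards sit within a single part $V_i$. When $T[V_i]$ is not a clique, the $\gamma_c(T[V_i])$ non-extra guards occupy the minimum connected dominating set of $T[V_i]$, which---since $T[V_i]$ is a tree---is exactly its set of non-leaf vertices, and these guards remain there throughout the game. The single extra guard sits at $h(V_i)$ by default, moves onto a leaf $\ell$ of $T[V_i]$ in response to an attack at $\ell$, and is moved back to $h(V_i)$ on the next turn if $V_i$ is not re-attacked, by the canonical modification. When $T[V_i]$ is a clique, the unique guard sits at $h(V_i)$ by default and moves to the attacked vertex after an attack on $V_i$.

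Next I would match $L$, $J$, $I$ to the local structure of each $T[V_i]$. Because $T[V_i]$ is a shrubbery, every vertex of $V_i$ either has degree one in $T[V_i]$ (a leaf of $T[V_i]$) or degree at least three (an inner vertex of $T[V_i]$); no degree-two vertices occur. A vertex of degree at least three in $T[V_i]$ has at least two children in $T$ lying in $V_i$, so it belongs to $I$. Conversely, combining the previously stated note that $J$ contains exactly one vertex of each $V_i$---namely the minimum-depth vertex $h(V_i)$---with the observation that any leaf of $T[V_i]$ other than $h(V_i)$ has its $T$-parent inside $V_i$ and hence no child inside $V_i$, I would identify $I$ with the inner vertices of the $T[V_i]$'s, $J$ with $\{h(V_i) : V_i \in \NEOCOL\}$, and $L$ with the non-$h$ leaves of the $T[V_i]$'s.

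With these correspondences in place the three bullets follow immediately from the invariants of the canonical strategy. Inner vertices are always occupied because they constitute the minimum connected dominating set of their part. The joining vertex $h(V_i)$ carries the extra (or sole clique) guard whenever that guard is at its default location, which happens precisely when no non-$h$ leaf of $T[V_i]$---equivalently, no vertex of $L \cap V_i$---was attacked on the current turn. A leaf vertex in $L$ receives a guard only on the turn it is attacked, since the extra guard is moved onto it in response and removed again on the next turn. The main obstacle is the identification $J = \{h(V_i) : V_i \in \NEOCOL\}$: this does not follow from the abstract definition of a nice neo-colonization and must be justified by tracing the construction of $\NEOCOL$ in \Cref{lem:neocol-construction} via \Cref{rdc:leaf} and \Cref{rdc:leaves}, checking that a vertex with exactly one child in $T$ can only end up at the minimum depth of its part.
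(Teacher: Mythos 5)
Your proof is correct; the paper states this as an observation without proof, and your argument matches the reasoning implicit in the surrounding text (the note that $J$ consists of the minimum-depth vertices of the parts and the case analysis in the lemma on $\CN(v,a)$). One small correction to your closing remark: the identification $J = \{h(V_i) : V_i \in \NEOCOL\}$ does not require tracing the construction in \Cref{lem:neocol-construction}; it follows directly from \Cref{def:nice-neocol}, since $h(V_i)$ is a leaf of $T[V_i]$ whose unique neighbour in $V_i$ must be a child (its parent lies outside $V_i$), while every other $v \in V_i$ has its parent inside $V_i$, so its number of children in $V_i$ equals $\deg_{T[V_i]}(v) - 1$, which cannot equal $1$ because a shrubbery has no degree-$2$ vertices.
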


Note that if a part was not attacked, then it moves to a configuration where all its inner vertices and its joining vertex are occupied.
Hence, a part moves its guards only if it is attacked in this or the previous turn.

\subsection{Attacking strategy}

To determine which vertex we want to attack, we will use the notion of a \emph{canonical} number of guards on a vertex or a subtree, based on which vertices would be occupied in the canonical strategy.
By $\CN(v, a)$ we denote the number of guards on $v \in V(T)$ in the canonical strategy after an attack on $a$.
Similarly by $\CN(X, a)$ we denote the total number of guards on $X \subseteq V(T)$ in the canonical strategy after an attack on $a$.
By $\CN_T(v, a)$ we denote $\CN(T(v), a)$.
Note that $\CN_T(v, a) = \CN(v, a) + \sum_{d \in \CHLD(v)} \CN_T(d, a)$.

\begin{lemma}
The following equality holds.
\[
  \CN(v, a) = \begin{cases}
    1 & \text{if } a = v\\
    1 & \text{if } v \in I\\
    1 & \text{if } v \in J \text{ and } a \notin L(\NEOCOL(v)) \\
    0 & otherwise
  \end{cases}
\]
\end{lemma}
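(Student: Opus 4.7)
The plan is to derive this lemma directly from the observation immediately preceding it, which characterizes when each vertex type is occupied by a guard in the canonical strategy. The proof is a case analysis matching the four cases of the lemma's piecewise definition, with no nontrivial computation required.

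First I would dispose of $a = v$: any valid defending strategy must leave a guard at the attacked vertex, giving $\CN(v, a) = 1$ regardless of $v$'s type. Next, for $a \neq v$ and $v \in I$, the observation states that inner vertices are always occupied, so $\CN(v, a) = 1$. For $a \neq v$, $v \in J$, and $a \notin L(\NEOCOL(v))$, no leaf of $v$'s part is attacked this turn, and the observation yields $\CN(v, a) = 1$ since the canonical strategy keeps the extra guard at $h(\NEOCOL(v)) = v$. The ``otherwise'' branch splits into $v \in L$ with $v \neq a$ (an unattacked leaf, hence unoccupied by the first bullet of the observation) or $v \in J$ with $v \neq a$ and $a \in L(\NEOCOL(v))$ (the extra guard of $\NEOCOL(v)$ starts at $v$ and is pulled along the shift-path to the attacked leaf $a$, vacating $v$); both give $\CN(v, a) = 0$.

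The only subtle point is to justify that for $v \in J$ one may identify $v$ with $h(\NEOCOL(v))$, so that the fate of the extra guard is controlled precisely by whether a leaf of $\NEOCOL(v)$ is attacked. This is the content of the earlier remark that $J$ contributes exactly the minimum-depth vertex of each part, and it ensures the $J$-case of the observation translates cleanly into the test ``$a \in L(\NEOCOL(v))$''. With this identification in hand, the four cases partition every possibility and the claimed equalities follow by inspection.
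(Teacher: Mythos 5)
Your proof is correct and follows essentially the same route as the paper: a direct case analysis reading off the preceding observation (attacked vertex occupied; inner vertices always occupied; exactly one non-inner vertex per part occupied, namely a leaf if one was attacked and the joining vertex otherwise). The only difference is that you spell out the identification of the $J$-vertex of a part with its minimum-depth vertex $h(V_i)$, which the paper leaves implicit.
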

\begin{proof}
  We consider the values of $\CN(v, a)$ case by case.
  Any vertex which was attacked must be occupied.  
  In the canonical strategy, all inner vertices are always occupied.
  Exactly one non-inner vertex in each part is occupied -- a leaf if the part was attacked, its joining vertex otherwise.
\end{proof}

Let $C \subseteq V(T)$ be a configuration of guards on $T$.
By $\G(v, C)$ we denote the number of guards on $v$ in $C$ and by $\GT(v,C)$ we denote the number of guards on $T(v)$ in $C$.
Let $\DF(v, a, C) = \G(v,C) - \CN(v, a)$ and $\DFT(v, a, C) = \GT(v,C) - \CN_T(v, a)$.
We say that $\DF$ or $\DFT$ is the \emph{deficit} of vertex $v$ or subtree $T(v)$, respectively.
We say that $v$ is \emph{deficient} if $\DFT(v, a, C) < 0$.

The general idea of the attacking strategy is to find a subtree with less than the necessary number of guards to defend itself.
We will show that the deepest root of such a subtree must be unoccupied, and therefore acts as a bottleneck for guards entering the subtree.


\attackstrategy

\begin{proof}
  First, we root the tree $T$ in an arbitrary leaf $r$, partition the vertices of the tree into a nice neo-colonization $\NEOCOL = \{V_1, \dots, V_t\}$, and partition the vertices into leaves, joining, and inner, i.e into sets $L, J$, and $I$ respectively.

  Let $v$ be a deficient vertex of maximum depth.
  We will show that such vertex always exists by showing that $r$ is always deficient.
  Consider the canonical strategy on $T$.
  Its number of guards is equal to the weight of neo-colonization $\NEOCOL$, on which the canonical strategy is based. 
  Together with \Cref{lem:neocol-construction}, this implies 
    $\CN(r,a) = \omega(\NEOCOL) = \EDN(T)$
  for any $a \in V(T)$.
  Moreover, $\GT(r, C) \leq \EDN(T) - 1$ and thus $\DFT(r, a, C) = \GT(r, C) - \CN(r, a) < 0$ for any $a \in V(T)$, therefore $r$ is deficient.

  With each attack, we will increase the depth of the deficient vertex of maximum depth.
  Eventually, when $v$ is a leaf of $T$, the fact that $v$ is deficient implies that there was an attack on $v$ which was not defended.

    We will show that there exists a vertex such that we either won by attacking it the previous turn or there is a vertex such that after attacking it, the depth of the lowest deficient vertex increases.

    If this is the first turn of the game, let $a$ be any vertex occupied by a guard.
    Otherwise, let $a$ be the previously attacked vertex.
    Then the following equalities hold.
    \[
      \begin{aligned}
        \DFT(v, a, C) &= \GT(v,C) - \CN_T(v, a) = g(v, C) - \CN(v, a) + \sum_{d \in \CHLD(v)}\!\GT(d, C) - \CN_T(d, C) \\
        &= \DF(v, a, C) + \sum_{d \in \CHLD(v)}\!\DFT(d, a, C)
      \end{aligned}
    \]

    Note that $\DFT(d, a, C) \geq 0$ for every $d \in \CHLD(v)$ by the choice of $v$.
    Furthermore, $\DFT(d, a, C)~\leq~0$ as $\DFT(d, a, C) > 0$ would imply $\DF(v, a, C) < -1$ and that in turn implies $g(v, C) < 0$.
    Therefore, $\DFT(d, a, C) = 0$, from which follows $\DF(v, a, C) = -1$, i.e., $v$ is not occupied and $\CN(v, a) = 1$.

    Now we show that we won the game by the attack on $a$ or $\NEOCOL(v)$ contains a vertex such that when we attack it, the depth of the deepest vertex increases.
    If $v \in L$, then $\CN(v, a) = 1$ and therefore $v = a$ while $v$ is not occupied, thus we won the game by the attack on $a$.
    Thus, let us assume that $v \in I \cup J$.
    
    Let $x$ be the vertex of $\NEOCOL(v)$ on which the canonical strategy would place the extra guard of $\NEOCOL(v)$, i.e., $x \in \NEOCOL(v) \cap (J \cup L)$ and $\CN(x, a) = 1$.
    Such vertex always exists and is uniquely defined -- either $a \in L(\NEOCOL(v))$, in which case $x = a$, or $x$ is the only vertex in $J(\NEOCOL(v))$.
    Also, note that after every attack, the value of $\CN(V_i)$ for every $V_i \in \NEOCOL$ remains unchanged and is equal to $|I(V_i)| + 1$.

    Now we show how to find the vertex which we want to attack.
    For illustration, see \Cref{fig:where-to-attack}.
    First, we will show that there exists $d \in \CHLD(v)$ such that $x \notin T(d)$.
    Suppose that $v \in I$.
    Then $|\CHLD(v)| \geq 2$ and as all subtrees of children of $v$ are vertex disjoint, $x \in T(d')$ for at most one $d' \in \CHLD(v)$.
    Therefore, $d \in \CHLD(v)$ such that $x$ is not in the subtree of $d$ exists.
    Otherwise, $v \in J$ and therefore $g(v) = 0$ and $\CN(v) = 1$ which implies $x = v$.
    Thus the only child of $v$ does not have $x$ in its subtree.

    \begin{figure}[]
        \centering
        \includegraphics[scale=1.2,page=1]{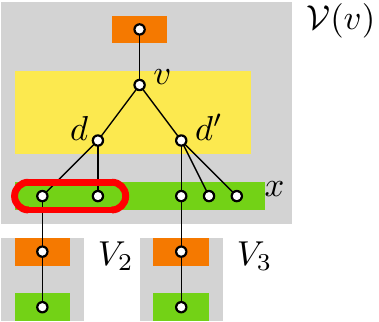}
        \hspace{1cm}
        \includegraphics[scale=1.2,page=2]{images/where-to-attack.pdf}
        \caption{The grey rectangles circumscribe individual parts of the neo-colonization. Orange marks vertices in $J$, yellow marks vertices in $I$ and green marks vertices in $L$. The red border shows the set of vertices from which we can choose the one to attack.}
        \label{fig:where-to-attack}
    \end{figure}

    Now suppose we choose any $b \in T(d) \cap L(\NEOCOL(v))$ for the attack.
    We show that $\CN_T(d, b) = \CN_T(d, a) + 1$.
    Let $V_d = V(T(d))$.
    Note that for every $V_i \in \NEOCOL$ such that $V_i \subseteq V_d$ it holds $\CN(V_i, a) = |I(V_i)| + 1 = \CN(V_i, b)$.
    It also holds $V_d \cap J(\NEOCOL(v)) = \emptyset$ and for every $\ell \in V_d \cap L(\NEOCOL(v))$ such that $\ell \neq b$ it holds $\CN(\ell, a) = \CN(\ell, b) = 0$ while $\CN(b, a) = 0$ and $\CN(b, b) = 1$.
     Let $\NEOCOL' = \{ V_i \mid V_i \subseteq V_d \}$, then
    \[
      \begin{aligned}
        \CN_T(d, b) &= \sum_{V_i \in \NEOCOL'} \CN(V_i, b) + \CN(\NEOCOL(v) \cap V_d, b) \\
        &= \sum_{V_i \in \NEOCOL'} \CN(V_i, b) + |I(\NEOCOL(v)) \cap V_d| + \underbrace{|\{b\} \cap L(\NEOCOL(v)) \cap V_d|}_{= 1}\\
        &= \sum_{V_i \in \NEOCOL'} \CN(V_i, a) + |I(\NEOCOL(v)) \cap V_d| + \underbrace{|\{a\} \cap L(\NEOCOL(v)) \cap V_d|}_{= 0} + 1\\
        &= \sum_{V_i \in \NEOCOL'} \CN(V_i, a) + \CN(\NEOCOL(v) \cap V_d, a) = \CN_T(d, a) + 1. \\
      \end{aligned}
    \]

    Also, the number of guards on $T(d)$ remains unchanged after the attack, as the only vertex from which a guard may enter $T(d)$ is $v$ and it is unoccupied.
    Let $C'$ be the configuration of guards by which the defender responds to the attack on $b$.
    It holds 
    \[
      \DFT(d, b, C') = g_T(d, C') - \CN_T(d, b) = g_T(d, C) - \CN_T(d, a) - 1 = \DFT(d, a, C) - 1 = -1
    \]
    thus $d$ is now a deficient vertex of greater depth than $v$.

    Therefore, after at most ${\rm diam}(T)$ attacks, the lowest deficient vertex must be a leaf, at which point the game is won by the attacker.
\end{proof}

This implies the following result.
\begin{corollary}\label{cor:result}
  Let $T$ be a tree, then $\MINAT(T) \leq n$.
\end{corollary}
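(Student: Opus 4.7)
The plan is to derive the corollary directly from Theorem \ref{thm:v_trees}, using only the elementary bound on the diameter of a tree.

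First, I would recall the intended meaning of $\MINAT(T)$ as introduced in the open problem of Klostermeyer et al.: it is a value such that any configuration $C$ which can defend against $\MINAT(T)$ attacks must be m-eternal dominating, or equivalently, any configuration that is not m-eternal dominating admits a winning attack sequence of length at most $\MINAT(T)$. Since $\EDN(T)$ is by definition the minimum size of an m-eternal dominating set on $T$, every configuration with $|C| \leq \EDN(T) - 1$ guards is automatically not m-eternal dominating, so the hypothesis of Theorem \ref{thm:v_trees} covers exactly these configurations.

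Next, I would invoke the elementary observation that $\operatorname{diam}(T) \leq n - 1$ for any tree on $n$ vertices, since the diameter is realized by a simple path which traverses at most $n$ vertices and hence at most $n - 1$ edges. Combining this with Theorem \ref{thm:v_trees}, for any configuration $C$ on $T$ with fewer than $\EDN(T)$ guards the attacker wins in at most $\operatorname{diam}(T) \leq n - 1 \leq n$ turns, which yields $\MINAT(T) \leq n$.

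There is no real obstacle here: all the technical work has already been absorbed into Theorem \ref{thm:v_trees}, where the canonical strategy based on the nice neo-colonization is used to track the deepest deficient vertex and drive it down the tree in each turn. The corollary is a one-step consequence obtained by substituting the trivial upper bound $n$ for $\operatorname{diam}(T)$.
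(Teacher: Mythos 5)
Your derivation matches the paper's: Corollary~\ref{cor:result} is presented there as an immediate consequence of \Cref{thm:v_trees}, and your explicit steps (unwinding the definition of $\MINAT$ and bounding ${\rm diam}(T) \leq n-1 \leq n$) are exactly the intended argument. The only caveat, which you inherit from the paper rather than introduce, is that \Cref{thm:v_trees} speaks only about configurations with at most $\EDN(T)-1$ guards, whereas the definition of $\MINAT$ quantifies over all non-dominating configurations, including badly placed ones of size at least $\EDN(T)$.
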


\section{Open problems}
Does there exist a graph $G$ on $n$ vertices that can not be defeated in $n$ attacks when the defender has at most $\EDN(G) - 1$ guards?

\bibliographystyle{plainurl}
\bibliography{main}


\end{document}